\definecolor{myblue}{RGB}{0, 102, 204}
\theoremstyle{plain}
\newtheorem{theorem}{Theorem}
\begin{document}
\title{Crypto-nonlocality in arbitrarily dimensional systems}

\author{Jianqi Sheng$^1$}

\author{Dongkai Zhang$^{1, 2}$}
\email[]{zhangdk@hqu.edu.cn}

\author{Lixiang Chen$^1$}
\email[]{chenlx@xmu.edu.cn}

\affiliation{$^1$Department of Physics, Xiamen University, Xiamen 361005, China}
\affiliation{$^2$College of Information Science and Engineering, Fujian Provincial Key Laboratory of Light Propagation and Transformation, Huaqiao University, Xiamen 361021, China}

\begin{abstract}
According to Bell’s theorem, any model based on local variables cannot reproduce certain quantum correlations. A critical question is whether one could devise an alternative framework, based on nonlocal variables, to reproduce quantum correlations while adhering to fundamental principles. Leggett proposed a nonlocal model, termed crypto-nonlocality, rooted in considerations of the reality of photon polarization, but this property restricted it to being bi-dimensional. In this Letter, we extend the crypto-nonlocal model to higher dimensions and develop a framework for constructing experimentally testable Leggett-type inequalities for arbitrary dimensions. Our investigation into models that yield specific predictions exceeding those of quantum mechanics is intriguing from an information-theoretic perspective and is expected to deepen our understanding of quantum correlations. 
\end{abstract}

\maketitle

There was a long debate about whether quantum mechanics could be considered a complete theory, as it did not identify the property known as ``realism'',  which postulates that definite values can be assigned to all observable quantities \cite{EPR}. 
It is attractive if some physical theory combines quantum mechanics with a realistic picture to yield fully deterministic measurement outcomes \cite{brunner2014bell}. 
In pursuit of such a framework, Leggett introduced the concept of ``crypto-nonlocality'', which allows for a highly specific decomposition of the correlations \cite{leggett}. 
More precisely, this concept suggests that the local behavior of individual subsystems within a composite quantum system should exhibit well-defined properties, as if each subsystem were in a pure quantum state, while the correlations are governed by certain nonlocal hidden variables \cite{branciard2008testing}.

It is quite natural to compare Leggett’s crypto-nonlocality to Bell’s local causality, which are both assumptions aimed at explaining correlations between distant events \cite{bell1964einstein}. 
Notably, previous experiments on Bell nonlocality cannot fully falsify the crypto-nonlocal model, as certain correlations that violate local realism remain compatible with it \cite{giustina2015significant,shalm2015strong}. 
Inspired by Leggett’s work, several groups have conducted experimental tests of Leggett-type models, which are widely regarded as exploring frameworks that relax the assumption of locality while preserving the assumption of realism \cite{groblacher2007experimental,paterek2007experimental,branciard2007experimental,branciard2008testing}. 
The experimental violation of Leggett-type inequalities has led some to conclude that any future extension of quantum theory must relinquish certain aspects of realistic descriptions \cite{groblacher2007experimental}. 
This conclusion arises from the fact that the assumption of realism has been falsified in conjunction with nonlocality in Leggett-type tests and with locality in Bell tests. 
Another explanation for the violation of Leggett-type inequalities is that it may be fundamentally impossible to deterministically describe the individual properties of the components of a maximally entangled pair \cite{branciard2008testing}. 
The exploration of Leggett's crypto-nonlocal model has already inspired some interesting insights, particularly regarding the predictive power and completeness of quantum theory \cite{Colbeck2008,colbeck2011no,colbeck2015completeness,stuart2012experimental}. Such studies hold promise for deepening our understanding of the foundational principles of quantum mechanics. 
However, investigations into crypto-nonlocality in higher-dimensional systems remain lacking, with the underlying physical principles still poorly understood and requiring further clarification.

In this work, we extend the crypto-nonlocal model based on the assumptions underlying Leggett’s construction. We demonstrate that for arbitrarily high-dimensional systems, certain quantum correlations are incompatible with the crypto-nonlocal model, and develop a novel method for constructing experimentally testable Leggett-type inequalities. 
The argument is organized as follows. 
First, we describe and analyze the concept of crypto-nonlocality introduced by Leggett. 
Next, we extend the crypto-nonlocal model to multi-dimensional systems, preserving Leggett’s framework while reintroducing specific, well-defined individual properties for the components of entangled pairs. 
Finally, building on the information-theoretic approach developed in the work of Colbeck and Renner \cite{Colbeck2008}, we generalize it to arbitrary dimensions and apply it to the Leggett-type crypto-nonlocal model, thereby demonstrating its incompatibility with the predictions of quantum mechanics.

Pure entangled states are characterized by the fact that the properties of the pair are well-defined, but those of the individual subsystems are not. 
The intuition behind the crypto-nonlocal model is to devise an alternative no-signaling model reproducing quantum correlations, in which the individual properties are given a higher degree of prediction than quantum theory. 
In the initial formulation of the crypto-nonlocal model, Leggett imposes that the individual photons of a composite system should locally behave as if they were in a pure quantum state, and obey the spin-projection rule on the marginal probabilities, which in the photonic spin-\(1/2\) system corresponds to the well-known Malus law for polarization \cite{leggett}. 
Here, we follow and extend the framework and foundational assumptions of Leggett’s construction. 
One assumption is the no-signaling condition, which means that faster-than-light communication is impossible \cite{cirel1980quantum,popescu1994quantum}. 
The other assumption is that the physical properties of local individual particles are well-defined, such as being described by a pure quantum state. 
Under these assumptions, the Leggett-type model can replicate the predictions of quantum theory in single-partite scenarios while offering more definite predictions for the properties of individual local particles in bipartite systems. 
However, it fails to fully reproduce quantum correlations. By focusing on marginal distributions and setting aside the correlations and their underlying mechanisms, this framework highlights the contrast between nonlocal realism and quantum predictions. 
This incompatibility allows us to derive constraints, in the form of inequalities, based on the no-signaling condition and the marginal distributions. 

\begin{figure}[tb]
\centering
\includegraphics[width=8.6cm]{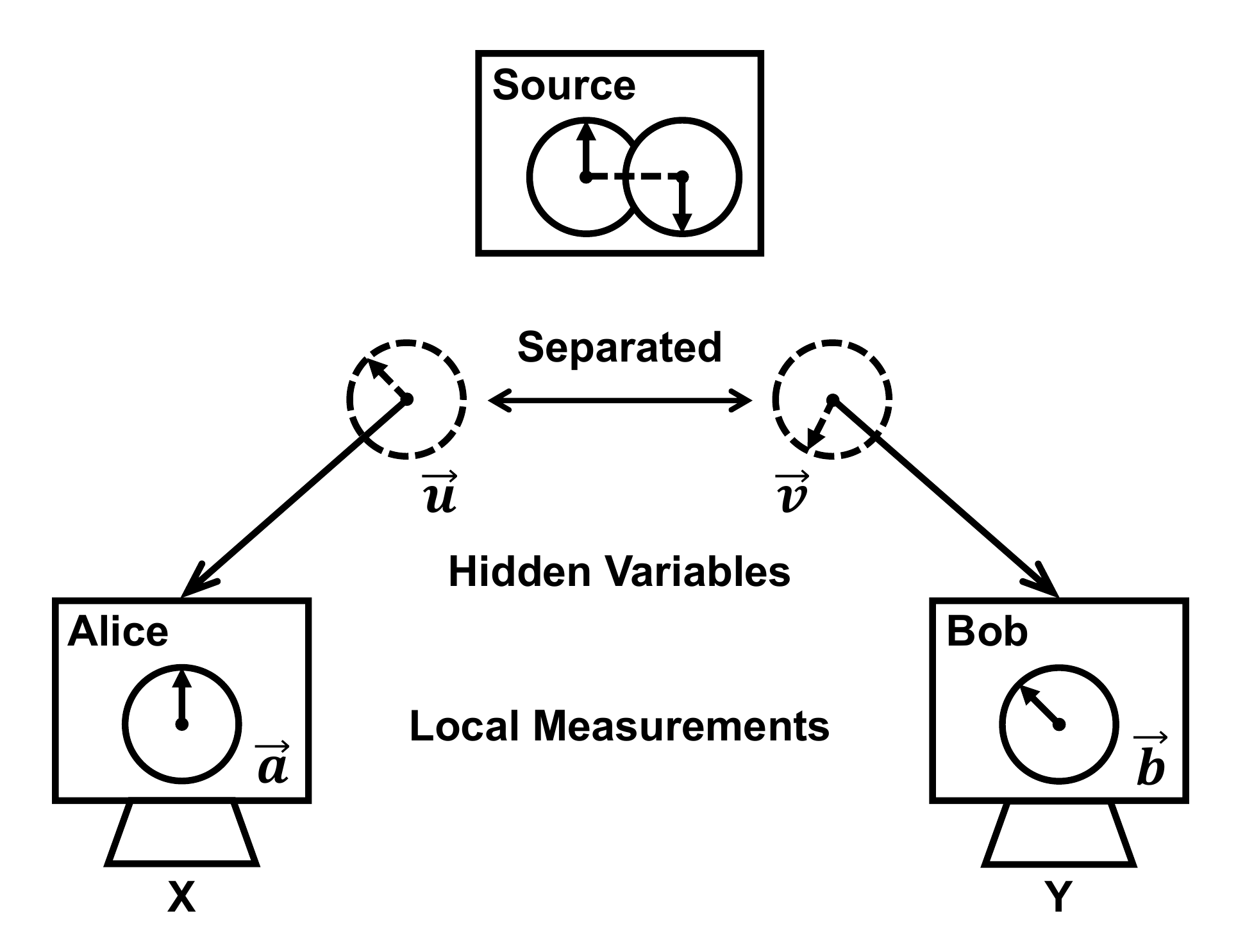}
\caption{{\sf A source emits two particles that travel to distant detectors. Under the assumptions of the crypto-nonlocal model, the physical property of each local individual particle can be defined by a pure quantum state, denoted by unit vectors \(\vec{u}\) and \(\vec{v}\). The measurement settings, freely and independently chosen by Alice and Bob, are denoted by unit vectors \(\vec{a}\) and \(\vec{b}\), with the corresponding measurement outcomes \(X\) and \(Y\), respectively. }}
\label{fig1}
\end{figure}

We focus on the scenario of a \(d\)-dimensional two-particle composite system. 
The two particles are distributed to spatially separated detectors, controlled by Alice and Bob, respectively. 
The measurement settings, freely and independently chosen by Alice and Bob, can be described by unit vectors \(\vec{a}\) and \(\vec{b}\) on the generalized Bloch sphere \(\mathbb{S}^{d^2-2}\) \cite{Bertlmann}. 
Each measurement has \(d\) possible outcomes, which we denote as \(X\) for Alice's output and \(Y\) for Bob's output. 
Physically, Alice and Bob each hold a subsystem of the shared quantum state, naturally described by their respective generalized Bloch vectors. 
Formally, the correlation observed can be described by a conditional probability distribution \(P(X,Y | \vec{a}, \vec{b})\). 
According to the hidden variable model, the outcomes of such measurements are assumed to depend on an underlying hidden random variable \(\lambda\). The conditional probability distribution is then decomposed into a statistical mixture of simpler correlations, \(P(X,Y | \vec{a}, \vec{b})=\int \mathrm{d} \lambda \rho(\lambda) P_\lambda(X,Y | \vec{a}, \vec{b})\), where \(\rho(\lambda)\) is the probability distribution of the hidden variable \(\lambda\). 
The no-signaling condition means that Alice’s local statistics are not influenced by Bob’s choice of measurement, \(P_\lambda(X| \vec{a}, \vec{b})=P_\lambda(X| \vec{a})\), and reciprocally, \(P_\lambda(Y | \vec{a}, \vec{b})=P_\lambda(Y| \vec{b})\) \cite{branciard2008testing}.

Following Leggett’s assumptions, realism is assumed to hold and corresponds to the hidden variables \(\lambda\), which are described as a product state of two qudits, \(\lambda=|\vec{u}\rangle \otimes|\vec{v}\rangle\), denoted by unit vectors \(\vec{u}\) and \(\vec{v}\). On each local side, the subsystem is assumed to have well-defined properties, described by the pure states \(|\vec{u}\rangle\) and \(|\vec{v}\rangle\). The local marginal probability distribution for Alice's side takes the form of 
\begin{align}\label{model}
	P_{\lambda}\left ( X=x| \vec{a}\right ) =\frac{1}{d}\left[1+(d-1) \vec{a^x} \cdot \vec{u}\right]\,, 
\end{align}
where the measurement settings are represented by the \((d^2-1)\)-dimensional general Bloch vectors \(\vec{a^x}\), the superscript \(x\) corresponds to the different outcomes \(X=x\). 
A similar symmetry applies to Bob’s side, where the marginal probability of detecting a particle in a specific mode must adhere to the quantum mechanical projection rule. This ensures consistency with previous experimental results while introducing no additional assumptions about the fundamental nature of the theory. 
The formulation of the local marginal probability distributions follows the standard structure predicted by quantum mechanics, consistent with experimental observations. This approach is analogous to the adherence to Malus' law in Leggett’s original model \cite{leggett}. 
Note that for a bi-dimensional system, this rule on the marginal probability distribution degenerates equivalent to Malus' law. 
Since Leggett’s assumptions focus exclusively on the local marginals without specifying the correlation, the model can still be non-local and violate a Bell inequality. 
The model reproduces the predictions of quantum theory in single-partite scenarios and makes more definite predictions for the properties of individual local particles in composite systems than quantum theory.

Our argument is motivated by the work of Colbeck and Renner \cite{Colbeck2008}, which demonstrate that quantum correlations cannot be reproduced by no-signaling theories that provide more accurate predictions of individual properties than quantum theory. 
The argument has been applied to experimentally falsify the Leggett-type hidden models in bi-dimensional systems \cite{colbeck2011no,stuart2012experimental}. 
Building on this, we extend the information-theoretic approach from Ref. \cite{Colbeck2008}, generalize it to arbitrarily high dimensions, and apply it to the Leggett-type crypto-nonlocal model, demonstrating the incompatibility between the model's predictions and those of quantum mechanics. 
Our argument proceeds in two steps. 
First, we establish a theorem that bounds the local marginals of the probability distributions using a quantity \(I_N\), introduced by Barrett \textit{et al.} in Ref \cite{barrett2006maximally}. 
Second, we show that there exist experimentally verifiable quantum correlations with sufficiently small values of \(I_N\), that are incompatible with the crypto-nonlocal model described above.

The scenario under consideration involves a bipartite setup in which a source emits two particles that are sent to two spatially separated detectors. These detectors are operated by two parties, Alice and Bob. Each party independently selects one measurement from \(N\) possible choices, with Alice’s measurement choice denoted by \(A\) and Bob’s by \(B\), where \(A, B\in \{1, \ldots, N\}\). The measurement devices then produce outcomes \(X\) and \(Y\) on Alice’s and Bob’s sides, respectively, where \(X,Y\in \{0, \ldots, d-1\}\). The measurements give rise to a joint probability distribution \(P(X,Y|\vec{a}_A,\vec{b}_ B)\) from which we quantify the correlations relevant to our statement in terms of \(I_N\), which was proposed in \cite{barrett2006maximally}, defined by, 
\begin{align}\label{chainbell}
&I_N =I_N\left \{  P(X,Y|\vec{a}_A,\vec{b}_ B)\right\}\nonumber\\
&:=\sum_{i=1}^N\left(\left\langle\left[X_i-Y_i\right]\right\rangle+\left\langle\left[Y_i-X_{i+1}\right]\right\rangle\right) ,
\end{align}
where \(\langle \cdot \rangle=\sum_{k=0}^{d-1} k P(\cdot =k)\) denotes the average value of the random variable `` \(\cdot\) '', and \(\left[\cdot\right]\) denotes `` \(\cdot\) '' modulo \(d\). Defining \(X_{N+1}:=X_{1}+1\) modulo \(d\). 
Suppose that Alice and Bob share the maximally entangled state, \(\left|\psi_d\right\rangle=\frac{1}{\sqrt{d}} \sum_{j=0}^{d-1}|j\rangle_A|j\rangle_B\), and their measurements take the form of projections onto the states \cite{collins2002bell}
\begin{align}
   & |X_A\rangle=\frac{1}{\sqrt{d}} \sum_{j=0}^{d-1} \exp \left[\frac{2 \pi i}{d} j\left(X-\alpha_A\right)\right]|j\rangle\,,\label{setA}\\
   & |Y_B\rangle=\frac{1}{\sqrt{d}} \sum_{j=0}^{d-1} \exp \left[-\frac{2 \pi i}{d} j\left(Y-\beta_B\right)\right]|j\rangle\,,\label{setB}
\end{align}
where \(\alpha_A=(A-1/2)/N\) and \(\beta_B=B/N\). With these measurement settings, quantum mechanics predicts that \(I_N=2\gamma/N+O\left(1 / N^2\right)\), denoted as \(I_N^{QM}\), where \(\gamma=\pi^2 /\left(4 d^2\right) \times\sum_{j=1}^{d-1} j / \sin ^2(\pi j / d)\). For sufficiently large \(N\), the quantity \(I_N^{QM}\) can be made arbitrarily small. 

We further consider potential alternatives to quantum theory that may admit a different description of the measurement process, as described in \eqref{model}. 
The hidden random variable \(\lambda\), described as a product state of unit vectors \(\vec{u}\) and \(\vec{v}\), might provide information leading to a joint distribution \(P(X,Y\mid \vec{a}_A,\vec{b}_ B, \vec{u},\vec{v})\). 
We now state our main theorem, which shows any no-signaling model reproducing the correlations defined by \eqref{chainbell}-\eqref{setB} has the property that the outcomes of any local measurement are completely undetermined and occur with the same probability. 
To quantify the statistical distance between two \(d\)-dimensional probability distributions \(P_X\) and \(Q_X\), we define here a quantity as \(\Delta\left(P_{X }, Q_{X }\right):= \sum_x\left|P_X(x)-Q_X(x)\right|/d\). 
\begin{theorem}\label{Theorem1}
Let \(P(X,Y\mid \vec{a}_A,\vec{b}_ B, \vec{u},\vec{v})\) be a conditional probability distribution that obeys the no-signaling condition. For any \(\vec{a}\) and \(\vec{b}\) chosen at random, we have 
\begin{align}\label{lemmaeq}
   \langle\Delta(P_{X \mid \vec{a}}, P_{X +1 \mid \vec{a}})\rangle_{\vec{u}} \leq I_N\left \{  P(X,Y|\vec{a}_A,\vec{b}_ B)\right\} ,
\end{align}
where \(\langle\cdot\rangle_{\vec{u}}\) denotes the average over the distributions of variable \(\vec{u}\). 
\end{theorem}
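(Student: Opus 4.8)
The plan is to derive \eqref{lemmaeq} from two elementary properties of the cyclic statistical distance $\Delta$: that it obeys the triangle inequality (being $1/d$ times the $\ell_1$ distance), and that for any pair of jointly measurable $d$-valued outcomes it is controlled by the mean modular difference of those outcomes. It is cleanest to establish the bound first at a fixed value of the hidden variable $\lambda=(\vec u,\vec v)$, writing $P_{\vec u,\vec v}$ for the induced no-signaling distribution, and to perform the averages over $\vec v$ and $\vec u$ only at the end.

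\emph{The per-pair bound.} For any Alice setting $\vec a_i$ and Bob setting $\vec b_i$ the joint distribution $P_{\vec u,\vec v}(X_i,Y_i\mid\vec a_i,\vec b_i)$ exists, and by no-signaling its marginals $P_{X_i\mid\vec a_i}$ and $P_{Y_i\mid\vec b_i}$ are independent of the other party's setting. Expanding $P_{X_i}(x)-P_{Y_i}(x)=\sum_{v\neq x}P(X_i{=}x,Y_i{=}v)-\sum_{u\neq x}P(X_i{=}u,Y_i{=}x)$, applying the triangle inequality termwise, and summing over $x$ gives $\sum_x|P_{X_i}(x)-P_{Y_i}(x)|\le 2P(X_i\neq Y_i)$, hence
\[
\Delta(P_{X_i\mid\vec a_i},P_{Y_i\mid\vec b_i})\le\tfrac{2}{d}\,P(X_i\neq Y_i)\le\tfrac{2}{d}\,\langle[X_i-Y_i]\rangle\le\langle[X_i-Y_i]\rangle ,
\]
using $P(X_i\neq Y_i)=\sum_{k\ge1}P([X_i{-}Y_i]{=}k)\le\sum_k k\,P([X_i{-}Y_i]{=}k)$ and $d\ge2$; the same (symmetric) argument bounds $\Delta(P_{Y_i\mid\vec b_i},P_{X_{i+1}\mid\vec a_{i+1}})$ by $\langle[Y_i-X_{i+1}]\rangle$. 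This step uses \emph{only} no-signaling — it constrains nothing about the form of the outcome distributions, only the marginals of each pairwise joint.

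\emph{Chaining, wraparound, and averaging.} Applying the triangle inequality along the chain of marginals $P_{X_1}\!\to\!P_{Y_1}\!\to\!P_{X_2}\!\to\!\cdots\!\to\!P_{Y_N}\!\to\!P_{X_{N+1}}$ and inserting the per-pair bounds yields $\Delta(P_{X_1\mid\vec a_1},P_{X_{N+1}\mid\vec a_{N+1}})\le\sum_{i=1}^N\big(\langle[X_i-Y_i]\rangle+\langle[Y_i-X_{i+1}]\rangle\big)=I_N\{P_{\vec u,\vec v}\}$, where $\vec a_1$ is arbitrary and the remaining $\vec b_j,\vec a_j$ are the settings entering $I_N$. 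Since $X_{N+1}:=X_1+1\ (\mathrm{mod}\ d)$ is the $\vec a_1$ measurement with relabeled outcomes, $P_{X_{N+1}\mid\vec a_{N+1}}=P_{X_1+1\mid\vec a_1}$, so $\Delta(P_{X\mid\vec a,\vec u,\vec v},P_{X+1\mid\vec a,\vec u,\vec v})\le I_N\{P_{\vec u,\vec v}\}$ for $\vec a=\vec a_1$. Finally, joint convexity of $\Delta$ in its two arguments gives $\Delta(P_{X\mid\vec a,\vec u},P_{X+1\mid\vec a,\vec u})\le\langle\Delta(P_{X\mid\vec a,\vec u,\vec v},P_{X+1\mid\vec a,\vec u,\vec v})\rangle_{\vec v\mid\vec u}$, and averaging over $\vec v$ and then $\vec u$, together with the linearity of $I_N$ in the probability distribution so that $\langle I_N\{P_{\vec u,\vec v}\}\rangle_{\vec u,\vec v}=I_N\{\int\! d\lambda\,\rho(\lambda)P_\lambda\}=I_N\{P(X,Y\mid\vec a_A,\vec b_B)\}$, produces exactly \eqref{lemmaeq}.

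\emph{Main obstacle.} There is no heavy computation here; the work is in getting the structure right. The crux is the per-pair bound: it must rely solely on no-signaling — only the marginals of each pairwise joint are available — and it is precisely what forces the cyclic, modular structure of $I_N$ to surface. The second delicate point is the wraparound bookkeeping: the hidden ``$+1$'' in $X_{N+1}:=X_1+1$ is what turns the closed chain of $2N$ triangle-inequality steps into the nontrivial assertion that the marginal under $\vec a_1$ is close to its own unit cyclic shift, and one must verify that the chain genuinely closes on the same marginal. The remaining ingredients — the triangle inequality and convexity of $\Delta$, and the linearity of $I_N$ — are routine.
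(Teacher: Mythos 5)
Your proposal is correct and follows essentially the same route as the paper's proof: a per-pair no-signaling bound relating \(\Delta(P_{X_i},P_{Y_i})\) to \(\langle[X_i-Y_i]\rangle\), the triangle-inequality chain through \(P_{X_1}\to P_{Y_1}\to\cdots\to P_{X_{N+1}}\), the wraparound identification \(P_{X_{N+1}}=P_{X_1+1}\), and a final average over the hidden variables. The only (harmless) differences are that your per-pair step uses the coupling bound \(\sum_x|P_X(x)-P_Y(x)|\le 2P(X\neq Y)\), giving a factor \(2/d\) that you then discard, where the paper instead uses \(P(X=Y)\le 1-\Delta\), and that you spell out the convexity/linearity bookkeeping in the final averaging more explicitly than the paper does.
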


\begin{proof}

From the fact that all probability distributions should be non-negative, we have 
\begin{align}\label{delta1}
 \langle[\cdot]\rangle=\sum_{x=0}^{d-1} x P([\cdot]=x) \geq 1-P([\cdot]=0),
\end{align}
for any variable `` \(\cdot\) ''. We also have the following inequality under the no-signaling condition \cite{barrett2006maximally}:
\begin{align}
 \begin{aligned}
&P\left(X_A=Y_B\right)=  \sum_{x=0}^{d-1} P\left(X_A=x, Y_B=x\right) \\
&\leq  \min \left(P\left(X_A=x\right), P\left(Y_B=x\right)\right)\\
&+\min \left(1-P\left(X_A=x\right), 1-P\left(Y_B=x\right)\right)\\
&=  1-\left|P\left(X_A=x\right)-P\left(Y_B=x\right)\right|.\label{7}
\end{aligned}
\end{align}
where the function \(\min \left(P, P'\right)\) returns the smaller of the two values \(P\) and \(P'\). The inequality is valid for any certain \(x\in \{0, \ldots, d-1\}\). Combining Eq.\eqref{7} and our definition of the statistical distance \(\Delta\left(P_{X }, Q_{X }\right):= \sum_x\left|P_X(x)-Q_X(x)\right|/d\), we have that 
\begin{align}
   & d \cdot P\left(X_A=Y_B\right)\nonumber\\
   &\leq d-\sum_{x=0}^{d-1}\left|P\left(X_A=x\right)-P\left(Y_B=x\right)\right|\nonumber\\
   &=d-d\Delta\left(P_{X_A }, P_{Y_B }\right)\label{8}
\end{align}
i.e., \(P\left(X_A=Y_B\right)\leq 1-\Delta\left(P_{X_A }, P_{Y_B }\right)\). Applying the conditions derived in Eq. \eqref{delta1} to the correlations \(I_N\) given in Eq. \eqref{chainbell}, and defining \(X_{N+1}:=X_{1}+1\pmod{d}\), we obtain: 
\begin{align}
 &I_N\left \{  P(X,Y|\vec{a}_A,\vec{b}_ B, \vec{u},\vec{v})\right\}\nonumber\\
 &\geq 2 N-\sum_{i=1}^N\left[P_{ \vec{u},\vec{v}}\left(X_i=Y_i\right)+P_{ \vec{u},\vec{v}}\left(X_{i+1}=Y_i\right)\right]\nonumber\\
 &\geq \sum_{i=1}^N\left[(\Delta\left(P_{X_i|\vec{a}_i,\vec{u}}, P_{Y_i|\vec{b}_i,\vec{v}}\right)+\Delta\left(P_{X_{i+1}|\vec{a}_{i+1},\vec{u}}, P_{Y_i|\vec{b}_i,\vec{v}}\right)\right]\nonumber\\
 &\geq \sum_{i=1}^N\left[(\Delta\left(P_{X_i|\vec{a}_{i},\vec{u}}, P_{X_{i+1|\vec{a}_{i+1},\vec{u}}}\right)\right]\nonumber\\
 &\geq \Delta\left(P_{X _1|\vec{a}_{1},\vec{u}}, P_{X _{N+1}|\vec{a}_{N+1},\vec{u}}\right) \label{final}
\end{align}
where the second inequality arises from the conditions derived in Eq. \eqref{8}, and the third and fourth inequality follows from the convexity of the function: \(\Delta\left(P_{1 }, P_{2 }\right)+\Delta\left(P_{2 }, P_{3 }\right)\geq \Delta\left(P_{1 }, P_{3 }\right)\). 
For clarity, we write \(\Delta\left(P_{X _1}, P_{X _{1}+1}\right)\) to denote \( \sum_x\left|P\left(X_1=x\right)-P\left(X_1=x+1\right)\right|/d\). 
The formulation is symmetric, making the claim applicable to any \(P_{X}\), not only for \(P_{X _1}\). By taking the average over \(\vec{u}\) and \(\vec{v}\) on both sides of \eqref{final}, the theorem proved. 

\end{proof}

\begin{figure}[!t]
\centering
\includegraphics[width=8.6cm]{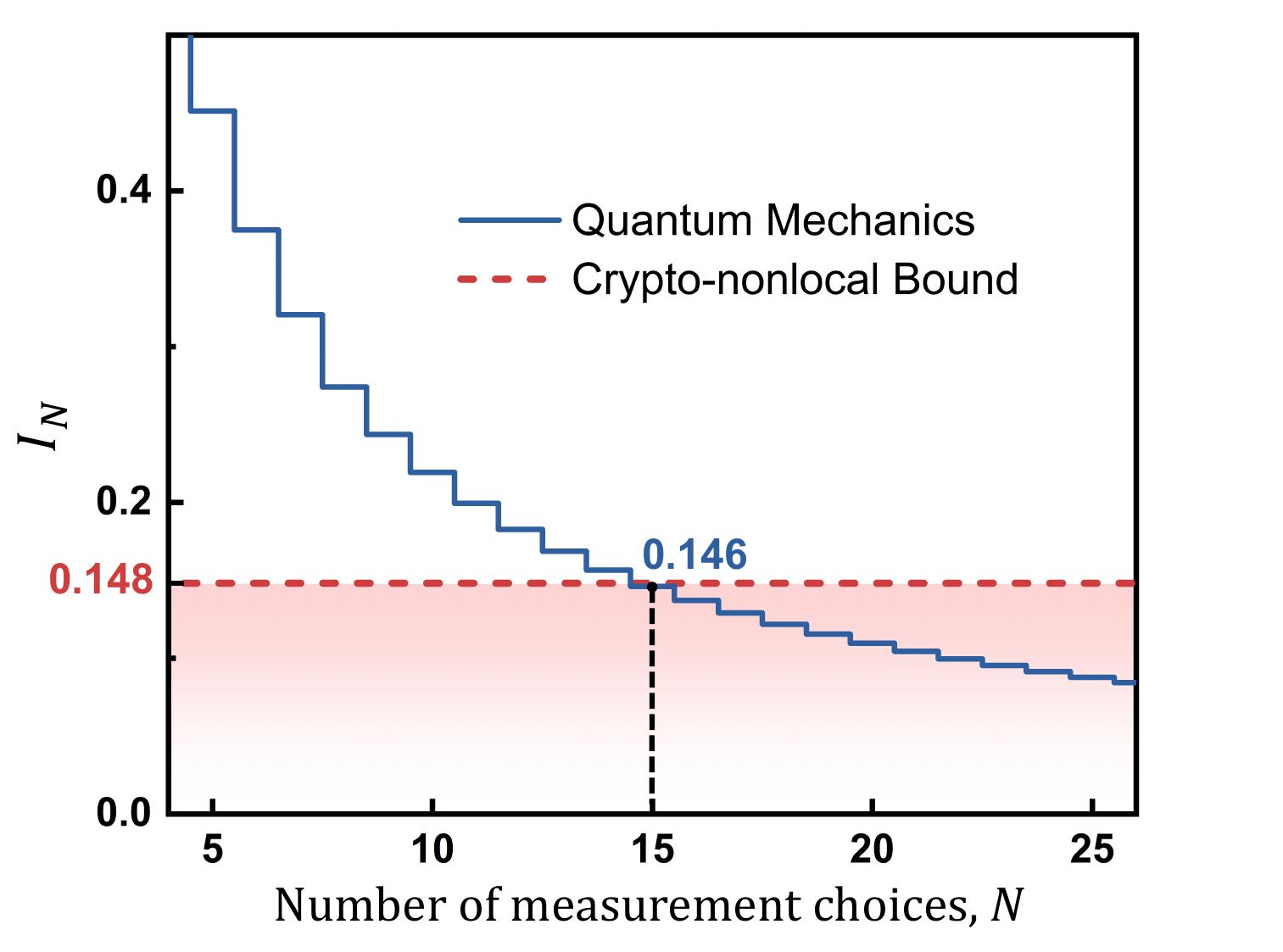}
\caption{{\sf The solid line (blue) is the theoretical curve that shows the quantum mechanics prediction for \(I_N\) with a three-dimensional maximally entangled state, \(\left|\psi_3\right\rangle=\frac{1}{\sqrt{3}} \sum_{j=0}^{2}|j\rangle_A|j\rangle_B\), versus the number of measurement bases per side used, \(N\). 
The dashed line (red) indicates the lower bound restricted by the generalized Leggett-type crypto-nonlocal model for three-dimensional systems. 
The restriction is violated for \(N\geq15\), and the model is falsified.}}
\label{fig2}
\end{figure}

For the goal of this work, we apply the theorem to the generalized Leggett-type crypto-nonlocal model described above. 
Based on Theorem 1 and the definition of the marginal probability distributions in \eqref{model}, the model imposes the constraint that 
\begin{align}\label{L}
\frac{d-1}{d^2}\sum_{x=0}^{d-1}\left \langle \left | \left( \vec{a^x}-\vec{a^{x-1}} \right)   \cdot\vec{u}  \right | \right \rangle   _{\vec{u}}\leq I_N,
\end{align}
which is similar in spirit to the Leggett inequality. For simplicity, we denote the left-hand side of the inequality as \(L\). 
Since the model is not fully specified due to the lack of an explicit distribution of the hidden variable \(\vec{u}\), we analyze the model under two possible cases, considering the implications of different distributions for \(\vec{u}\). 
First, we consider the scenario where \(\vec{u}\) is uniformly distributed over the entire Bloch sphere, which is a more natural assumption than restricting it to a specific direction corresponding to the chosen measurements. 
In this case, the inequality has an explicit lower bound, 
\(L \geq\frac{2(d-1)}{d^2}\langle | ( \vec{a^1}-\vec{a^{0}} )   \cdot\vec{u}   |  \rangle   _{\vec{u}} \geq\frac{2(d-1)}{d^3}\). 
For sufficiently large \(N\), quantum mechanics predicts \(I_{N}^{QM}\) can be made arbitrarily small, allowing the inequality to be violated. 
For example, we show the theoretically predicted violations by a three-dimensional maximally entangled state in FIG. \ref{fig2}. 
The violation of the inequality provides solid evidence that it is impossible to reconstruct quantum correlations when the properties of individual subsystems are described by a pure quantum state. Another question arises when we relax the assumption to allow the physical properties of individual particles to be only partially defined. 
Following the discussion by Branciard \textit{et al.} in Ref. \cite{branciard2008testing}, we consider a generalized model where the `local states' of the hidden variables \(\lambda\) are allowed to be mixed states with a degree of purity \(\eta \). 
Accordingly, we modify Eq. \eqref{model} to take the form 
\(P_{\lambda} ( X=x| \vec{a} ) =\frac{1}{d}[1+\eta(d-1)  \vec{a^x} \cdot \vec{u}]\), with a similar definition for Bob's side. The derivation of inequalities for this modified model follows the same way above. The constraint \eqref{L} became that \(\frac{\eta(d-1)}{d^2}\sum_{x=0}^{d-1} \langle | ( \vec{a^x}-\vec{a^{x-1}} )   \cdot\vec{u}   |  \rangle   _{\vec{u}}\leq I_N\). 
In FIG. \ref{fig3}, we give the critical number of measurement settings \(N\) per side sufficiently to falsify the Leggett-type model versus the number of dimensions \(d\), shown for degrees of purity \(\eta =1\), \(0.9\), \(0.7\), and \(0.5\). 
For large enough \(N\), the inequality is violated for any value of \(0<\eta \leq 1\). 
Thus, when the degree of purity is non-zero, a stronger statement can be concluded that individual properties cannot be even partially defined. 

\begin{figure}[!t]
\centering
\includegraphics[width=8.75cm]{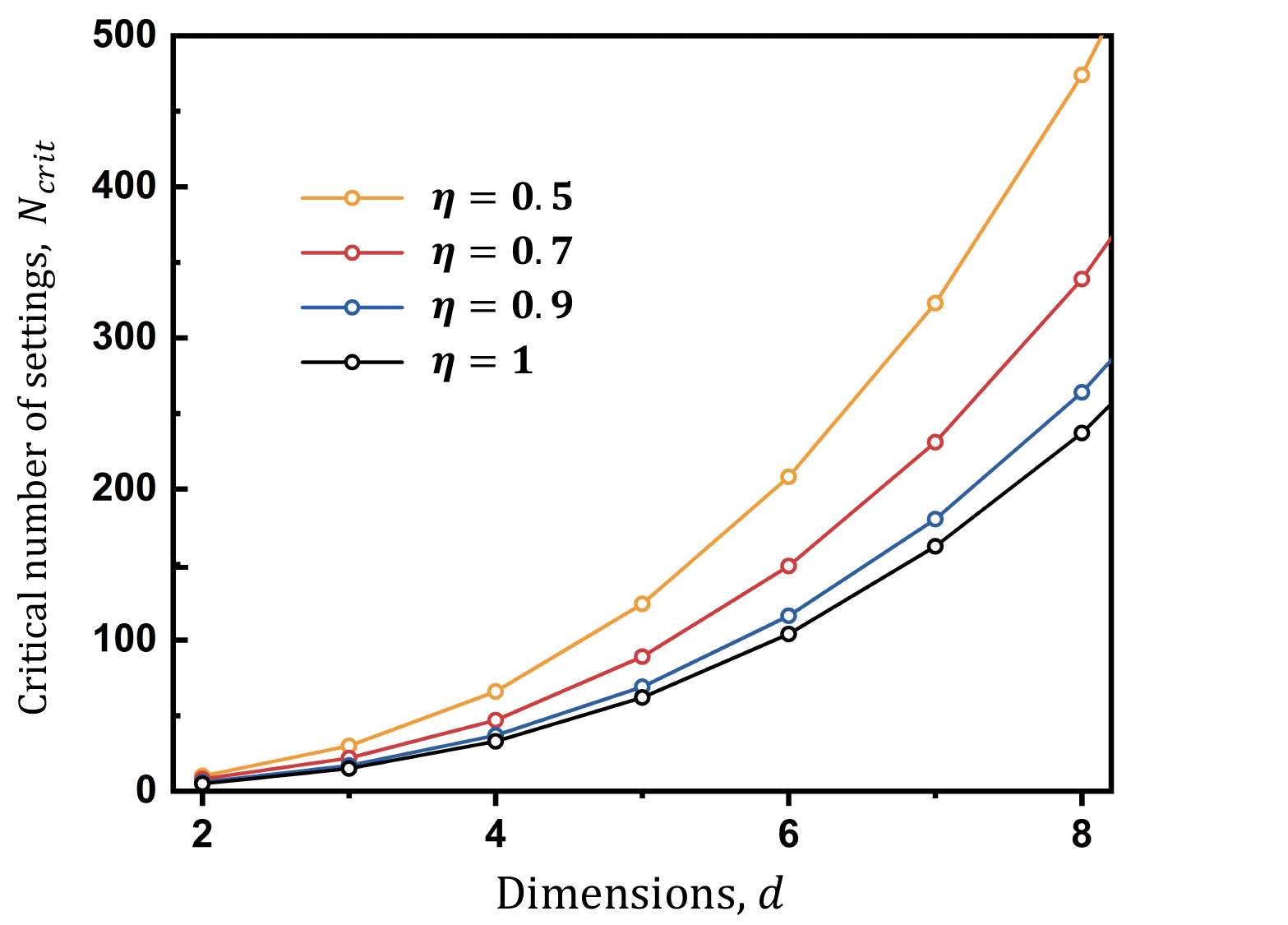}
\caption{{\sf The critical number of measurement settings \(N_{crit}\) per side sufficiently to falsify the Leggett-type model versus the number of dimensions \(d\), shown for degrees of purity \(\eta =1\), \(0.9\), \(0.7\), and \(0.5\) (from bottom to top).}}
\label{fig3}
\end{figure}

In the second scenario, \(\vec{u}\) is specified in a certain direction. 
In this case, the inequality can be violated for sufficiently large \(N\), unless \(\vec{u}\) is completely orthogonal to the plane containing all measurement vectors. 
To address this limitation, an additional set of measurements can be introduced, identical to the original set but lying in an orthogonal plane. This additional set ensures that the constraints on \(\vec{u}\) do not prevent violations of the inequality. 
Denote the measurement sets as \(S_1\) and \(S_2\) and measure the quantity \(I_N\) independently within the plane of each set, resulting in the values \(I_N^1\) and \(I_N^2\). 
The constraint in \eqref{L} became that \(L\leq \min \left ( I_N^1,I_N^2  \right )  \) for all \(\vec{a} \in S_1 \cup S_2\). 
For a quantum system with \(d\) levels, where the associated Hilbert space is \(\mathbb{H}^d \), the generalized Bloch sphere is embedded in a \(d^2-1\) dimensional Euclidean space, \(\mathbb{R}^{d^2-1} \). In this space, a unit vector can be orthogonal to a maximum of \(d^2-2\) mutually orthogonal unit vectors. 
Therefore, to ensure the inequality can be violated, a sufficient number of measurement sets is required, with at most \(d^2-2\) sets being necessary. 
As a result, quantum mechanics predict that the local hidden variables \(\vec{u}\) cannot influence the local measurement outcomes. 
By choosing appropriate measurement sets, we establish an experimentally testable Leggett-type inequality for arbitrary dimensions, which can be observed using current experimental technology \cite{dada2011experimental}.

It is natural to compare the constraints imposed by Leggett’s crypto-nonlocality and Bell’s local causality, which are both assumptions to explain correlations between distant events. 
We consider here the relation between the violation of the Leggett inequality and the Bell inequality for arbitrarily multi-dimensional systems. 
Similar to the description above, the correlations predicted by Bell’s local causality model also satisfy another inequality of the quantity, \( I_N\geq d-1\) \cite{collins2002bell}. 
When the measurement settings are under some certain restriction, such that the observed quantity satisfies \( L \leq I_N< d-1\), the correlations are compatible with the crypto-nonlocal model but can violate the local causality model. 
On the other hand, if we consider the correlations to be deterministic, meaning the probability distributions \(P(X, Y| \vec{a}, \vec{b}) \) are assigned the values 0 or 1, the correlations are local and fully predictable. 
Such correlations are compatible with Bell’s local causality model. but not always compatible with the crypto-nonlocal model. 
Consider two different settings \(\vec{a_1}\) and \(\vec{a_2}\) for Alice that \(P (X,Y| \vec{a_1}, \vec{b})=P (X,Y| \vec{a_2}, \vec{b})=1\). Following the crypto-nonlocality model must be such that \(\vec{a_1} \cdot \vec{u}=\vec{a_2} \cdot \vec{u}=1\), which is impossible for \(\vec{a_1}\ne\vec{a_2}\). 
In principle, it is possible to falsify crypto-nonlocality by considering only one party. In contrast, to falsify local causality, the measurements of two distant parties are required, as it involves correlations between events that are spatially separated. 
This difference in experimental setup implies that there is no direct logical relation between the two concepts. 
Hence, the two models represent distinct frameworks for understanding the nature of quantum correlations, and violations of one do not necessarily imply violations of the other.

In conclusion, we reformulated the crypto-nonlocal model with the constraint of the marginal distribution hypothesis, making no assumptions about the time order of events, all assumptions are made based on the local part of the correlation. 
Similar to locality, the investigation of crypto-nonlocality aims to explore non-local models that may either align with or deviate from quantum predictions, thereby providing insight into the fundamental nature of quantum correlations. 
By generalizing an information-theoretic theorem \cite{Colbeck2008} to arbitrary dimensions, we demonstrated that such a model offering more advanced predictions of individual properties than quantum theory is incompatible with quantum mechanics, which constitutes a good complement to the previous results \cite{Colbeck2008,colbeck2011no,colbeck2015completeness,stuart2012experimental}. 
The incompatibility enables us to derive constraints in the form of inequalities, offering a theoretical foundation for experimental tests of crypto-nonlocality in arbitrarily high-dimensional systems. 
The generalized Leggett-type model and its limits reported here contribute new insights into the investigation of quantum correlations and are expected to lead to a deeper understanding of the fundamental nature of quantum mechanics.

\begin{acknowledgments} 
This work was supported by National Natural Science Foundation of China (12034016, 12205107), the National Key R\&D Program of China (2023YFA1407200), Natural Science Foundation of Fujian Province of China (2021J02002) for Distinguished Young Scientists (2015J06002), Program for New Century Excellent Talents in University (NCET-13-0495), and Natural Science Foundation of Xiamen City (3502Z20227033).
\end{acknowledgments}

\bibliographystyle{apsrev4-2} 

\end{document}